\documentclass{IEEEtran}
\usepackage{cite}
\usepackage{amsmath,amssymb,amsfonts}
\usepackage{algorithmic}
\usepackage{algorithm}
\usepackage{graphicx}
\usepackage{textcomp}
\usepackage{dsfont}
\usepackage{color}
\usepackage{hyperref}
\hypersetup{
    colorlinks=true,
    linkcolor=blue,
    filecolor=blue,      
    urlcolor=blue,
    citecolor=cyan,
}
\usepackage{graphicx} 
\usepackage{float} 
\usepackage{subfig} 

\newtheorem{corollary}{\bf Corollary}[section]
\newtheorem{remark}{\textit{Remark}}

\newtheorem{lemma}{\bf Lemma}[section]
\newtheorem{assumption}{\bf Assumption}[section]
\newtheorem{theorem}{\bf Theorem}[section]
\newtheorem{definition}{\bf Definition}[section]
\definecolor{red}{rgb}{0.9,0,0}

\definecolor{blue}{rgb}{0,0,0.9}

\def\BibTeX{{\rm B\kern-.05em{\sc i\kern-.025em b}\kern-.08em
    T\kern-.1667em\lower.7ex\hbox{E}\kern-.125emX}}
\begin{document}
\title{Equilibrium analysis of three-player General Lotto game with leader-follower framework}
\author{Yang Jiao, Dunbiao Niu, and Yiguang Hong*, \IEEEmembership{Fellow, IEEE}
\thanks{This work was supported by the National Key Research and Development Program of China under Grant 2022YFA1004700,  the National Natural Science Foundation of China under Grant 62573319, and the Fundamental Research Funds for the Central Universities.}
\thanks{ Yang Jiao and Yiguang Hong are with Shanghai Research Institute for Intelligent Autonomous System, Tongji University, Shanghai, 201210, China (e-mail: jy0903@tongji.edu.cn, yghong@iss.ac.cn)  }
\thanks{Dunbiao Niu is with Department of Control Science and Engineering, College of Electronics and Information Engineering, Tongji University, Shanghai, 201804, China (email: dunbiaoniu@tongji.edu.cn)}
}

\maketitle

\begin{abstract}

In this paper, we introduce the General Lotto game with a regulator (R-Lotto), a leader-follower extension of the classical two-player General Lotto game. The model captures regulatory interventions in competitive resource allocation, where a regulator first chooses an intervention parameter to influence the subsequent competition between two resource-constrained followers. The intervention parameter represents favoritism toward one of the followers, and the followers then play a general Lotto subgame with favoritism. We derive the followers' equilibrium payoff and characterize the Nash--Stackelberg equilibrium (NSE) intervention of the regulator. We further develop a multi-battlefield R-Lotto model with a regulator budget constraint. In this setting, the follower subgames on different battlefield is decoupled, while the regulator's intervention decisions are coupled through a common budget. Numerical simulations demonstrate the proposed equilibrium characterizations and provide practical decision-making guidance for the regulator. 
\end{abstract}

\begin{IEEEkeywords}
Three-player game, Resource allocation, General Lotto game, Regulator
\end{IEEEkeywords}

\section{Introduction}

Game theory for resource allocation \cite{alghunaim2019proximal,yi2016initialization,diaz2025strategic} has attracted significant attention as a powerful framework for modeling strategic interactions among self-interested agents competing over limited resources. To address the diverse characteristics of such problems, a wide range of game-theoretic models have been proposed. For example, an imperfect-information dynamic Stackelberg game is employed in \cite{wei2016imperfect} to allocate cloud resources and maximize profits through strategic bidding, while a cooperative Nash bargaining game is adopted in \cite{zhang2015resource} to achieve fair subchannel and power allocation in cognitive small-cell networks.

Recently, the two-player General Lotto game \cite{hart2008discrete} has emerged as a prominent framework for capturing bilateral competition over limited resources \cite{paarporn2024incomplete,paarporn2024strategically}. As a canonical variant of the Colonel Blotto game \cite{gross1950continuous,roberson2006colonel}, the General Lotto game differs in that players are allowed to employ randomized resource allocations subject to an expected budget constraint. This relaxation provides both analytical tractability and enhanced modeling flexibility.

Notably, the role of a regulator in competitive resource allocation has received increasing attention in recent years \cite{arnold2011regulation,han2007non}, as such agents explicitly account for overall social welfare \cite{vcivckova2022applied}. By actively intervening in the allocation process, regulators can adjust underlying market mechanisms and steer outcomes toward socially desirable objectives. However, most existing studies on the General Lotto game remain focused on two-player competitions, leaving the strategic role of a potential third player largely unexplored. For instance, \cite{paarporn2024reinforcement} considers a two-player General Lotto game with pre-allocation, where one player allocates resources to battlefields in advance.

Most existing studies on equilibrium analysis, including the General Lotto game, focus on two-player settings, typically characterized by Nash equilibrium (NE) \cite{jiao2026marginal} or Stackelberg equilibrium (SE) \cite{fabiani2021local}. Nevertheless, games with more than two players have also been paid more and more attention to for different equilibria \cite{jiao2025bg,xu2024consistency,bacsar1998dynamic,mukaidani2021infinite,zheng2022stackelberg}. This observation naturally motivates the study of three-player General Lotto games, in which the regulator interacts strategically with the two competing players.

In this paper, we introduce the General Lotto game with a regulator, an extension of the General Lotto game that incorporates regulatory intervention into the strategic environment. In this game, the regulator first chooses an intervention parameter that biases the subsequent competition between two resource-constrained followers. The regulator possesses its own payoff function, and acts as a Stackelberg leader who anticipates the followers’ equilibrium responses. The followers then engage in a general Lotto subgame with favoritism.
\begin{figure*}[t]
    \centering
    \includegraphics[width=1\linewidth]{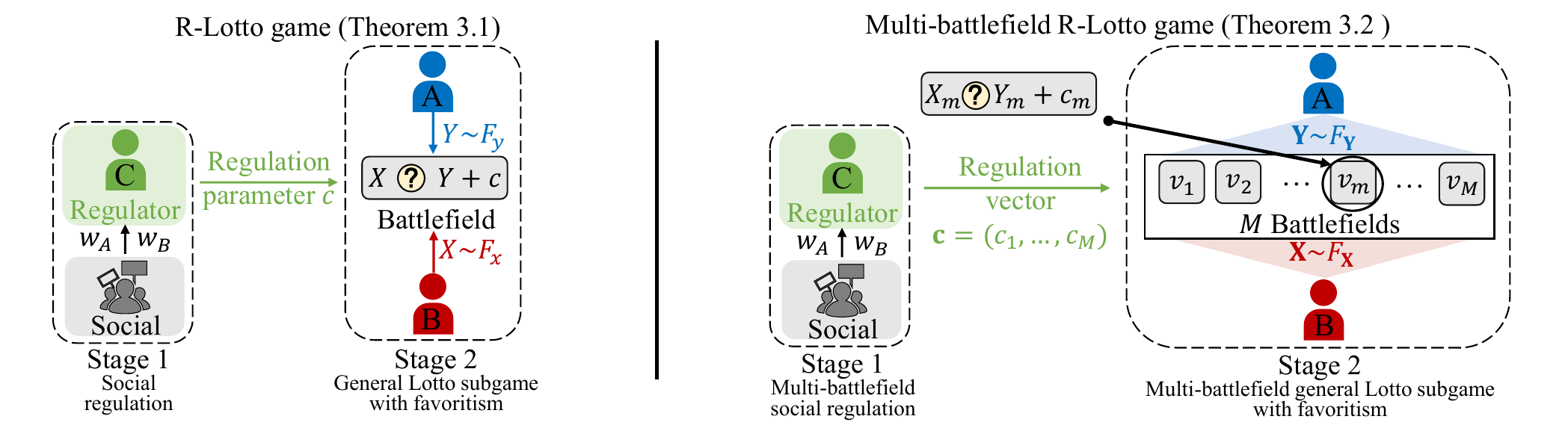}
    \caption{(Left-hand side) The R-Lotto game (See Theorem \ref{the_general_cost_NSE}), where Player C performs a resource pre-allocation, Player A and Player B compete over a single battlefield with favoritism. (Right-hand side) The multi-battlefield R-Lotto game (See Theorem \ref{the_budget_multi_NSE}), extending the single-battlefield to multiple battlefields.}
    \label{fig_R_Lotto}
\end{figure*}

{\color{black}The main contributions are summarized as follows:
\begin{itemize}
    \item For the single-battlefield R-Lotto game, we characterize the NSE under a general intervention cost (Theorem \ref{the_general_cost_NSE}). Specifically, for any continuous and strictly increasing cost function, we reduce the regulator's problem to a bounded one-dimensional optimization problem and derive the corresponding NSE.
    \item We extend the R-Lotto framework to a multi-battlefield setting with an expected regulator budget constraint. By applying Jensen's inequality and the upper concave envelope, we provide the multi-battlefield NSE (Theorem~\ref{the_budget_multi_NSE}). The resulting intervention distribution is either a two-point distribution or a degenerate distribution, and can be efficiently computed by bisection methods.
    \item We provide decision-making guidance for the regulator based on the equilibrium characterizations. In particular, the regulator may focus resources on high-value battlefields while leaving low-value battlefields inactive.
\end{itemize}
}

The rest of the paper is organized as follows: Section II introduces the R-Lotto game. Section III analyzes the corresponding NSE, with an extension to the multi-battlefield case, while Section IV verifies the results through simulations. Finally, Section V concludes the paper.

\section{Game formulation}

In the two-player General Lotto game \cite{hart2008discrete}, two players simultaneously allocate their respective resource budgets across battlefields, and the player who allocates more resources wins.

Here, we extend the Lotto game to a General Lotto game with a regulator (R-Lotto), which is a two-stage leader-follower game that involves two competing players and one social regulator (See Fig. \ref{fig_R_Lotto}). Specifically, the R-Lotto game is organized in the following two stages.

\textit{1) Stage 1 (social regulation):} Player C, acting as a social regulator, chooses an intervention parameter $c \in \mathbb{R}$, which reflects the extent and direction of favoritism in resource pre-allocation. The regulator aims to maximize its own payoff, defined as
\begin{equation}
\label{payoff_c}
    \Pi_C(\Pi_A, \Pi_B; c) := w_A \Pi_A + w_B \Pi_B - H(|c|),
\end{equation}
where two social weight parameters $w_A, w_B \geq 0$ with $w_A + w_B = 1$ indicate Player C's preference between Players A and B, $\Pi_A$ and $\Pi_B$ denote their respective payoffs, and $H(\cdot)$ is the intervention cost function. The objective of Player C is to balance the payoffs of two participants while keeping the cost of intervention as small as possible.  Specifically, a larger $w_A$ (or $w_B$) indicates that Player C pays more attention to A’s (or B’s) payoff. In practice, no intervention incurs no cost, whereas stronger interventions generally require more regulatory resources and therefore lead to higher implementation costs. To capture this phenomenon in a general way, we impose the following mild assumption.
\begin{assumption}
\label{ass_cost}
    The cost function $H(\cdot):\mathbb{R}_{+}\rightarrow \mathbb{R}_{+}$ is continuous and strictly increasing, and satisfies $H(0)=0$, $\lim_{c\rightarrow+\infty}H(c)=+\infty$.
\end{assumption}

\textit{2) Stage 2 (general Lotto subgame with favoritism):} Under the regulator’s intervention, we consider a single-battlefield continuous General Lotto game. Let $a,b \in \mathbb{R}_+$ with $a\ge b >0$ \footnote{Due to the symmetry between Players~A and~B, the alternative case $b\ge a>0$ can be analyzed analogously.} denote the resource budgets of Players A and B, respectively. Player A chooses a cumulative distribution function (CDF) $F_x$ of a nonnegative random variable $X$ with $\mathbb{E}_{X \sim F_x}[X]=a$ \footnote{The resource constraint adopted here follows the formulation in \cite{hart2008discrete}. In recent studies, a relaxed version $\mathbb{E}_{X \sim F_x}[X]\le a$ is also used, which is mathematically equivalent in this context.}, Player B chooses a CDF $F_y$ of a nonnegative random variable $Y$ with $\mathbb{E}_{Y \sim F_y}[Y]=b$. Let $\mathcal{F}_a$ and $\mathcal{F}_b$ denote the strategy sets available to Players~A and~B, respectively. The payoff of Player A is defined as
\begin{equation}
\label{Pi_A}
    \Pi_A(F_x, F_y;c) = \mathbb{P}_{X \sim F_x, Y \sim F_y} ( X \ge c + Y ),
\end{equation}
The payoff of Player B is 
$$\Pi_B(F_x, F_y;c) = 1 - \Pi_A(F_x, F_y;c). $$
Define a general Lotto subgame with favoritism as
\begin{equation}
\label{subgame}
    G=<\{A,B\},\{\mathcal{F}_a,\mathcal{F}_b\},\{\Pi_A,\Pi_B\}>.
\end{equation}

With $\mathcal{C}\subseteq\mathbb{R}$ denoted as the strategy set of Player C, the two-stage R-Lotto game can be written as
\begin{equation}
\label{R_Lotto}
    \Gamma=<\{A,B,C\},\{\mathcal{F}_a,\mathcal{F}_b,\mathcal{C}\},\{\Pi_A,\Pi_B,\Pi_C\}>.
\end{equation}
Then we introduce the NSE for the R-Lotto game as follows.


\begin{definition} \label{Def_NSE}
The strategy triple $(F_x^*,F_y^*; c^*)\in  \mathcal{F}_a \times \mathcal{F}_b\times \mathcal{C}$ is said to be an NSE of the R-Lotto game \eqref{R_Lotto} if $(F_x^*,F_y^*)=(\bar{F}_x^{c^*},\bar{F}_y^{c^*})$ and 
$$
\begin{aligned}
   &  \Pi_C(\bar{F}_x^c,\bar{F}_y^{c};c) \le \Pi_C(F_x^*, F_y^*;c^*), \forall \ c \in \mathcal{C},
\end{aligned}
$$
where $(\bar{F}_x^c,\bar{F}_y^c)$ is the NE of the subgame \eqref{subgame} with intervention parameter $c$, i.e.,
$$
\begin{aligned}
     &\Pi_A(F_x,\bar{F}_y^c;c) \le \Pi_A(\bar{F}_x^c,\bar{F}_y^c;c), \forall \ F_x \in \mathcal{F}_a,\\
     &\Pi_B(\bar{F}_x^c,F_y;c) \le \Pi_B(\bar{F}_x^c,\bar{F}_y^c;c), \forall\ F_y \in \mathcal{F}_b.
\end{aligned}
$$
\end{definition}

\section{Equilibrium Solutions}

{\color{black}In this section, we establish the existence and explicit form of the NSE in the R-Lotto game \eqref{R_Lotto}. We further extend the framework to a multi-battlefield setting. Rather than considering a direct extension of the single-battlefield results, we introduce a regulator budget constraint.

\subsection{The NSE of the R-Lotto game}

According to Definition \ref{Def_NSE}, seeking the NSE of the R-Lotto game can be decomposed into two parts. First, by treating $c$ as a fixed parameter, we need to derive the followers’ NE strategy $(\bar{F}_x^c,\bar{F}_y^c)$ of the subgame \eqref{subgame} and compute their corresponding payoffs, both of which can be regarded as functions of $c$. Second, these payoff functions are substituted into $\Pi_C$ to determine the optimal $c^*$ that maximizes the regulator’s payoff.

To begin with, we introduce the following parameterized families of cumulative distribution functions (CDFs), where $\delta_z$ denotes the Dirac distribution concentrated at $z$ and $U(\ell,r)$ denotes the uniform distribution on $[\ell,r]$.
\begin{itemize}
    \item If $c\ge0$, define 
\begin{equation}
\label{case1}
\begin{aligned}
\hat{F}_x(d) &\sim (1-\tfrac{a}{c+d})\delta_0+\tfrac{a}{c+d}U(c,c+2d), \\
\hat{F}_y(d) &\sim (1-\tfrac{b}{d})\delta_0+\tfrac{b}{d}U(0,2d). 
\end{aligned}
\end{equation}
\item If $c<0$, define\begin{equation}\label{case2}
\begin{aligned}
&\hat{F}_x(d) \sim (1-\tfrac{a}{d})\delta_0+\tfrac{a}{d}U(0,2d), \\
&\hat{F}_y(d) \sim (1-\tfrac{b}{d-c})\delta_0+\tfrac{b}{d-c}U(-c,-c+2d). 
\end{aligned}
\end{equation}
\end{itemize}

The parameter $d$ in \eqref{case1} and \eqref{case2} controls both the continuous support and the probability mass at zero. The next lemma identifies the value of $d$ under which these parameterized distributions form the NE of the follower subgame for each fixed intervention parameter $c$.

\begin{lemma}[Adapted from \cite{vu2021colonel}]
\label{the4}
    Consider the general Lotto subgame with favoritism \eqref{subgame}. For any given intervention parameter $c$, the parameter $d^*$ that characterizes the NE strategy $(\bar{F}_x^c,\bar{F}_y^c)=(\hat{F}_x(d^*),\hat{F}_y(d^*))$ in the parameterized classes \eqref{case1} and \eqref{case2} is given by
        $$
        d^*=
        \left\{
        \begin{aligned}
            & \tfrac{b+\sqrt{b^2+2cb}}{2},&& \text{if}\, c\ge r_0,\\
            &  a-c,&& \text{if}\, 0\le c< r_0,\\
            & \tfrac{a+\sqrt{a^2-2ca}}{2},&& \text{if}\,c<0.
        \end{aligned}
        \right.
        $$
        where $r_0=a-\tfrac{b+\sqrt{b^2+8ab}}{4}$.
\end{lemma}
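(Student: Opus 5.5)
The plan is to verify directly that, for each fixed $c$, the pair $(\hat{F}_x(d^*),\hat{F}_y(d^*))$ is a NE. The subgame \eqref{subgame} is constant-sum, so it suffices to exhibit for each player an affine function of that player's own allocation with two properties. It must dominate the player's pointwise payoff against the opponent's candidate strategy, and it must coincide with that payoff on the support of the player's own candidate strategy. Taking expectations and using the budget constraint $\mathbb{E}[X]=a$ (resp. $\mathbb{E}[Y]=b$) then shows that no deviation in $\mathcal{F}_a$ (resp. $\mathcal{F}_b$) does better. The value of $d^*$ will come out of the requirement that these affine majorants exist, together with feasibility: the weights $a/(c+d)$, $b/d$, etc. must lie in $[0,1]$. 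By construction the means are already $a$ and $b$, since a uniform distribution on $[c,c+2d]$ has mean $c+d$.

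\emph{Case $c\ge 0$.} Against $\hat F_y(d)$, Player A's payoff from a deterministic allocation $x$ is $f_A(x)=\mathbb{P}(Y\le x-c)$. This equals $0$ for $x<c$, equals $(1-\tfrac bd)+\tfrac{b}{2d^2}(x-c)$ on $[c,c+2d]$, and equals $1$ beyond. Against $\hat F_x(d)$, Player B's payoff from $y$ is $f_B(y)=\mathbb{P}(X<c+y)=1-\tfrac{a}{c+d}+\tfrac{a}{2d(c+d)}y$ on $[0,2d]$, which is affine on B's whole support. Hence B's condition holds as soon as $b\le d$ and B's mass at $0$ is consistent with the intercept. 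For A there are two regimes.

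\begin{itemize}
\item \emph{A puts positive mass at $0$.} Then the affine extension of the middle piece of $f_A$ must pass through the origin, which gives $1-\tfrac bd-\tfrac{bc}{2d^2}=0$. Equivalently $2d^2-2bd-bc=0$, so $d=\tfrac{b+\sqrt{b^2+2cb}}{2}$. This is admissible only while $a\le c+d$.
\item \emph{The constraint $a\le c+d$ binds.} Then A has no atom at $0$, so $d=a-c$. The majorant now only needs a nonnegative intercept, i.e. $1-\tfrac bd-\tfrac{bc}{2d^2}\ge 0$.
\end{itemize}

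The switch between the two regimes occurs where both conditions hold with equality. Substituting $c=a-d$ gives $2d^2-bd-ab=0$, so $d=\tfrac{b+\sqrt{b^2+8ab}}{4}$ and $c=r_0$. I will then check monotonicity in $c$: the first formula is valid for $c\ge r_0$, the second for $0\le c<r_0$, and $b\le d$ holds in both regimes because $a\ge b$.

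\emph{Case $c<0$.} Favoritism now benefits B, so I repeat the argument with the roles swapped. Player B's support is $\{0\}\cup[-c,-c+2d]$ and the zero-intercept condition applies to B. Since $a\ge b$, A's weight $a/d\le 1$ holds automatically, and only the interior regime survives. The analogous quadratic $2d^2-2ad+ac=0$ gives $d=\tfrac{a+\sqrt{a^2-2ca}}{2}$. I must also confirm that B's weight $b/(d-c)$ does not exceed $1$, which should follow from $a\ge b$ and $c<0$.

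The main obstacle is the ties and the boundary behaviour at $x=c$ (resp. $y=-c$). The payoff uses $X\ge c+Y$, so atoms and the break point of the piecewise-linear $f_A$ need care. I would handle this by checking the majorant inequality separately on $[0,c)$, $[c,c+2d]$ and $(c+2d,\infty)$. I would also verify that the values at the kink do not exceed the line, which holds because the candidate strategies place no atom at $c$.
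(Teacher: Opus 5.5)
Your proposal is correct, and it takes a different route from the paper, which gives no argument of its own. Lemma~\ref{the4} is imported from the characterization of General Lotto games with favoritism in \cite{vu2021colonel}, specialized to one battlefield with an additive favoritism $c$. You instead certify each best response directly, using an affine majorant of the pointwise payoff that touches it on the candidate's support; this is the Lagrangian certificate for a mean-constrained best response. You then read $d^*$ off the conditions under which such a certificate exists.

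The computations check out:
\begin{itemize}
\item For $c>0$, the zero-intercept condition gives $2d^2-2bd-bc=0$.
\item The no-atom branch $d=a-c$ has nonnegative intercept exactly when $2(a-c)^2-b(a-c)-ab\ge 0$, i.e.\ $c\le r_0$.
\item The two branches meet at $r_0$. Since $c+d$ increases along the first branch, that branch is feasible exactly for $c\ge r_0$.
\item For $c<0$, the relation $2d^2-2ad+ac=0$ gives the third branch.
\end{itemize}

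The citation buys generality (several battlefields, more general favoritism). However, it leaves the reader to reconcile sign and tie-breaking conventions and gives no intuition for $r_0$. Your argument is elementary and handles the rule $X\ge c+Y$ explicitly. It also explains $r_0$ as the point where the zero-intercept condition and A's no-atom constraint bind simultaneously. As a bonus, it delivers Corollary~\ref{cor1} for free: the equilibrium payoff is the majorant evaluated at the budget, e.g.\ $\tfrac{ab}{2d^2}=\tfrac{a}{2d+c}$ on the first branch.

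Two attributions should be corrected when you write it up:
\begin{itemize}
\item On the first branch, $b\le d$ follows from $c\ge 0$, not from $a\ge b$.
\item For $c<0$, $a/d\le 1$ follows from $d\ge a$, which holds because $c<0$. There the role of $a\ge b$ is to make $b/(d-c)<1$, and that is what rules out a second regime.
\end{itemize}
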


Based on Lemma \ref{the4}, which provides the closed-form expression of the parameter $d^*$, we can further substitute it into the payoff function defined in \eqref{Pi_A} to obtain the corresponding equilibrium payoff.

\begin{corollary}
\label{cor1}
     Consider the general Lotto subgame with favoritism \eqref{subgame}. For any given intervention parameter $c$, the equilibrium payoff $\Pi_A(\bar{F}_x^c,\bar{F}_y^c;c)$ admits the following closed-form characterization: 
        $$
        \Pi_A=
        \left\{
        \begin{aligned}
            & \tfrac{a}{b+c+\sqrt{b^2+2cb}},&& \text{if}\, c\ge r_0,\\
            &  1-\tfrac{b}{2(a-c)},&& \text{if}\, 0\le c< r_0,\\
            & 1-\tfrac{b}{a-c+\sqrt{a^2-2ca}},&& \text{if}\,c<0.
        \end{aligned}
        \right.
        $$
\end{corollary}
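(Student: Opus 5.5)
The plan is a direct substitution. By Lemma~\ref{the4} the equilibrium strategies are $(\hat F_x(d^*),\hat F_y(d^*))$ from \eqref{case1} or \eqref{case2}. I will compute $\Pi_A=\mathbb{P}(X\ge c+Y)$ for these distributions as a function of a generic $d$, using independence and conditioning on which component (atom at $0$ or uniform part) each player draws. Then I will insert the value of $d^*$ from Lemma~\ref{the4} in each regime and simplify using the quadratic equation that $d^*$ satisfies.

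\emph{Case $c\ge 0$.} If $X$ sits at its atom $0$, then $X\ge c+Y$ can only hold on the event $\{c=0,\,Y=0\}$. This event needs a separate check: when $c=0$ and $d^*=a$ the atom of $X$ has mass $1-a/a=0$, and in the remaining edge situation the contribution is handled by the tie convention. So A wins only when $X$ is on its uniform part $U(c,c+2d)$, which has probability $a/(c+d)$.
\begin{itemize}
\item If $Y=0$, which has probability $1-b/d$, then A wins surely, since $X\ge c$.
\item If $Y\sim U(0,2d)$, then $c+Y\sim U(c,c+2d)$ has the same law as $X$, so A wins with probability $1/2$ (ties have measure zero).
\end{itemize}
Hence
\[
\Pi_A=\frac{a}{c+d}\Big(1-\frac{b}{2d}\Big).
\]
For $0\le c<r_0$ we have $d^*=a-c$, so $c+d^*=a$ and $\Pi_A=1-\frac{b}{2(a-c)}$ immediately. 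For $c\ge r_0$, the value $d^*=\frac{b+\sqrt{b^2+2cb}}{2}$ is the positive root of $2d^2=2bd+cb$, and $b+c+\sqrt{b^2+2cb}=c+2d^*$. It therefore suffices to verify
\[
(2d-b)(c+2d)=2d(c+d),
\]
which expands exactly to $2d^2=2bd+bc$.

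\emph{Case $c<0$.} Here the condition is symmetric in roles.
\begin{itemize}
\item If $Y=0$, which has probability $1-b/(d-c)$, then $c+Y=c<0\le X$, so A wins surely.
\item If $Y\sim U(-c,-c+2d)$, then $c+Y\sim U(0,2d)$, which is never below $0$. A then wins with probability $0$ when $X=0$ (a measure-zero tie) and with probability $1/2$ when $X\sim U(0,2d)$.
\end{itemize}
Thus
\[
\Pi_A=1-\frac{b}{d-c}\Big(1-\frac{a}{2d}\Big).
\]
With $d^*=\frac{a+\sqrt{a^2-2ca}}{2}$, the root of $2d^2=2ad-ca$, we have $a-c+\sqrt{a^2-2ca}=2d^*-c$. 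The claim then reduces to
\[
(2d-a)(2d-c)=2d(d-c),
\]
which is again equivalent to the defining quadratic.

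The only delicate point is the treatment of ties at atoms, namely $X=0$ against $c+Y=0$ at $c=0$. The rest is algebra. I would handle the ties by noting that in every regime at least one of the two colliding components is either continuous or has zero mass at the relevant parameter values.
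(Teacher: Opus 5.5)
Your proof is correct and follows the paper's route. The paper only says that substituting $d^*$ from Lemma~\ref{the4} into \eqref{Pi_A} gives the formula; your computation of $\Pi_A$ for a generic $d$, combined with the quadratic identities $2d^2=2bd+bc$ and $2d^2=2ad-ac$, carries that substitution out in full.

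One small tidy-up concerns the case $c=0$. There you always have $d^*=a$, because $r_0\le 0$ forces $a=b$, in which case $d^*=b=a$. So the atom of $X$ at $0$ has zero mass, your ``remaining edge situation'' is vacuous, and no appeal to the tie convention is needed. This matters, since under $X\ge c+Y$ a positive-mass tie would be awarded to A and would break the formula.
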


After the exact follower payoff is obtained, the remaining task is to solve the regulator's scalar optimization problem. To simplify notation, throughout the following results we write $P(c):=\Pi_A(\bar F_x^c,\bar F_y^c;c)$. Then the regulator's payoff under the follower equilibrium is
\begin{equation}
\label{leader_objective_general}
    J(c):=\Pi_C(\bar F_x^c,\bar F_y^c;c)=w_B+(w_A-w_B) P(c)-H(|c|).
\end{equation}
The next lemma presents a result of $P(c)$ used by the regulator. Its proof is given in Appendix~\ref{app_payoff_mono}.

\begin{lemma}
\label{lem_payoff_mono}
The function $P(c)$ given in Corollary~\ref{cor1} is continuous and strictly decreasing on $\mathbb{R}$. Moreover, $P(0)=1-\tfrac{b}{2a}$, $\lim_{c\rightarrow-\infty}P(c)=1$, and $\lim_{c\rightarrow+\infty}P(c)=0$.
\end{lemma}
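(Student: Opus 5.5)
We take the closed form of $P(c)$ from Corollary~\ref{cor1} as given and argue piece by piece. We check each of the three branches: it is continuous and strictly decreasing on its own interval. We then check that the branches agree at the breakpoints $c=0$ and $c=r_0$. Continuity plus piecewise strict monotonicity then gives strict monotonicity on all of $\mathbb{R}$. The endpoint values follow by direct evaluation and limits.

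\emph{Branch $c<0$.} Here $P(c)=1-\tfrac{b}{g(c)}$ with $g(c)=a-c+\sqrt{a^2-2ca}$. On $c<0$ we have $a^2-2ca>a^2>0$, so $g$ is continuous and positive. Both $-c$ and $\sqrt{a^2-2ca}$ are strictly decreasing in $c$, so $g$ is strictly decreasing. Hence $b/g$ is strictly increasing and $P$ is strictly decreasing. As $c\to-\infty$ we have $g\to+\infty$, so $P\to 1$. As $c\to 0^-$ we have $g\to 2a$, so $P\to 1-\tfrac{b}{2a}$.

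\emph{Middle branch, $0\le c<r_0$.} First we check $r_0\ge 0$. Since $b\le a$, we have $\sqrt{b^2+8ab}\le 3a$, so $\tfrac{b+\sqrt{b^2+8ab}}{4}\le a$; the branch is nonempty when $r_0>0$. On this branch $a-c>0$ because $r_0<a$, so $P(c)=1-\tfrac{b}{2(a-c)}$ is continuous and strictly decreasing. Evaluating at $c=0$ gives $P(0)=1-\tfrac{b}{2a}$, matching the left limit.

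\emph{Branch $c\ge r_0$.} Here $P(c)=\tfrac{a}{h(c)}$ with $h(c)=b+c+\sqrt{b^2+2cb}$. Whenever $b^2+2cb>0$, $h$ is positive, continuous and strictly increasing, so $P$ is strictly decreasing. As $c\to+\infty$ we have $h\to\infty$, so $P\to0$. If $r_0<0$ we must also check that $b^2+2r_0b\ge0$ holds and that the domain is handled correctly; this degenerate case is where I expect some care.

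\emph{Matching at $c=r_0$ (the main obstacle).} We must show
\[
\frac{a}{b+r_0+\sqrt{b^2+2r_0 b}}=1-\frac{b}{2(a-r_0)}.
\]
Set $s=a-r_0=\tfrac{b+\sqrt{b^2+8ab}}{4}$. By construction $s$ is the positive root of $2s^2-bs-ab=0$, i.e.\ $2s^2=b(s+a)$. The right-hand side becomes $\tfrac{2s-b}{2s}$. On the left, $b^2+2r_0b=b^2+2ab-2sb$. We use the root relation to express this as a perfect square, aiming to show $\sqrt{b^2+2ab-2sb}=2s-b$; indeed $(2s-b)^2=4s^2-4sb+b^2=2b(s+a)-4sb+b^2=b^2+2ab-2sb$. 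Since $2s\ge b$, the denominator is $b+r_0+2s-b=a+s$. The left-hand side is then $\tfrac{a}{a+s}$. Finally, $\tfrac{a}{a+s}=\tfrac{2s-b}{2s}$ is equivalent to $2as=(2s-b)(a+s)$, i.e.\ $2s^2=b(a+s)$, which is exactly the root relation.

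Combining the three branches, $P$ is continuous and strictly decreasing on each closed piece and agrees at the breakpoints, so it is strictly decreasing on all of $\mathbb{R}$ with the stated values and limits.
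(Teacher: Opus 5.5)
Your proof is correct and follows essentially the same route as the paper: strict monotonicity on each branch (you use composition of monotone maps where the paper differentiates), the limits, and matching at $c=r_0$. Your root relation $2s^2=b(a+s)$, giving $\sqrt{b^2+2br_0}=2s-b$, is the same identity the paper writes as $\sqrt{b^2+2br_0}=\tfrac{S-b}{2}$ with $S=\sqrt{b^2+8ab}$. The $r_0<0$ case you flag in the third branch cannot occur, because you already showed $r_0\ge0$ from $a\ge b$, which gives $b^2+2cb\ge b^2>0$ there; and when $r_0=0$ (i.e.\ $a=b$) the middle branch is empty, but your matching identity at $r_0$ still yields $P(0)=1-\tfrac{b}{2a}$ and continuity at $c=0$.
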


Lemma \ref{lem_payoff_mono} shows that the follower equilibrium payoff $P(c)$ decreases monotonically as the intervention parameter increases. Since the regulator's payoff consists of a weighted follower payoff and an intervention cost, this monotonicity immediately restricts the direction in which an optimal intervention can occur. 

The remaining task is to determine the optimal intervention $c^*$. Using the payoff defined in \eqref{leader_objective_general}, the NSE of the R-Lotto game can be characterized as follows.

\begin{theorem}
\label{the_general_cost_NSE}
Consider the R-Lotto game \eqref{R_Lotto}. Then the NSE intervention $c^*$ is characterized as follows:
\begin{itemize}
    \item If $w_A=w_B$, then $c^*=0$.
    \item If $w_A>w_B$, then
    \begin{equation}
    \label{Cstar_A}
        c^*\in\arg\max_{-c_1\le c \le 0}\{(w_A-w_B) P(c)-H(-c)\},
    \end{equation}
    where $c_1$ is the unique solution of $H(c_1)=(w_A-w_B)\tfrac{b}{2a}$.
    \item If $w_A<w_B$, then
    \begin{equation}
    \label{Cstar_B}
     c^*\in\arg\max_{0 \le c \le c_2}\{(w_A-w_B) P(c)-H(c)\},
    \end{equation}
    where $c_2$ is the unique solution of $H(c_2)=-(w_A-w_B)(1-\tfrac{b}{2a})$.
\end{itemize}
\end{theorem}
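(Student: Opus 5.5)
The approach is to work directly with the reduced objective \eqref{leader_objective_general}, $J(c)=w_B+(w_A-w_B)P(c)-H(|c|)$, using the properties of $P$ from Lemma~\ref{lem_payoff_mono} and of $H$ from Assumption~\ref{ass_cost}. The additive constant $w_B$ does not affect maximizers, so we only need to maximize $g(c):=(w_A-w_B)P(c)-H(|c|)$ over $\mathbb{R}$. We then split into three cases according to the sign of $w_A-w_B$.

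\emph{Case $w_A=w_B$.} Here $g(c)=-H(|c|)$. By Assumption~\ref{ass_cost}, $H$ is strictly increasing with $H(0)=0$, so $g$ has the unique maximizer $c^*=0$.

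\emph{Case $w_A>w_B$.} First we rule out $c>0$. For such $c$, Lemma~\ref{lem_payoff_mono} gives $P(c)<P(0)$, and also $H(c)>H(0)$. Hence $g(c)<g(0)$. Next we rule out $c<-c_1$. The equation $H(c_1)=(w_A-w_B)\tfrac{b}{2a}$ has a unique solution $c_1>0$ by the intermediate value theorem: $H$ is continuous, strictly increasing, starts at $0$, diverges to $+\infty$, and the target value is positive since $a\ge b>0$. For $c<-c_1$ we have $P(c)<1$, using strict monotonicity together with the limit $\lim_{c\to-\infty}P(c)=1$. Therefore
\[
g(c)<(w_A-w_B)\cdot 1-H(c_1)=(w_A-w_B)\bigl(1-\tfrac{b}{2a}\bigr)=g(0),
\]
where the equality uses $P(0)=1-\tfrac{b}{2a}$. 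So every maximizer lies in $[-c_1,0]$. On this compact interval $g$ is continuous, because $P$ is continuous by Lemma~\ref{lem_payoff_mono} and $H$ is continuous. By the Weierstrass theorem a maximizer exists, and it is then a global maximizer of $J$. This yields \eqref{Cstar_A}.

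\emph{Case $w_A<w_B$.} This is symmetric. For $c<0$, $(w_A-w_B)P(c)<(w_A-w_B)P(0)$ because $P(c)>P(0)$ and the coefficient is negative, and $H(-c)>0$; hence $g(c)<g(0)$. The constant $c_2$ exists uniquely because $-(w_A-w_B)(1-\tfrac{b}{2a})>0$; here we use $1-\tfrac{b}{2a}\ge \tfrac12$ since $a\ge b$. For $c>c_2$, $P(c)>0$ by the limit at $+\infty$ and strict decrease, so
\[
g(c)<-H(c_2)=(w_A-w_B)\bigl(1-\tfrac{b}{2a}\bigr)=g(0).
\]
Compactness and continuity on $[0,c_2]$ then give existence and \eqref{Cstar_B}.

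Having obtained $c^*$, the pair $(\bar F_x^{c^*},\bar F_y^{c^*})$ from Lemma~\ref{the4} completes the NSE of Definition~\ref{Def_NSE}.

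The main subtle point is the truncation step. It needs the strict bounds $0<P(c)<1$ for all finite $c$, which follow from strict monotonicity together with the limits, and it must compare against $g(0)$ exactly, so the constants $c_1,c_2$ have to be matched to $P(0)$ as stated. Everything else is routine.
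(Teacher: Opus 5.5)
Your proof is correct and follows essentially the same route as the paper. You first exclude wrong-sign interventions by comparing with $J(0)$ via the strict monotonicity of $P$ and $H$, then truncate using $P\le 1$ (resp.\ $P\ge 0$) together with the defining equations of $c_1$ and $c_2$; this is exactly the paper's bound $P(c)-P(0)\le 1-P(0)=\tfrac{b}{2a}$ (resp.\ $P(0)-P(c)\le P(0)$). Your explicit Weierstrass argument for the existence of a maximizer is a small but welcome addition that the paper leaves implicit, though the strict bounds $0<P<1$ you flag as essential are not needed, since strict monotonicity of $H$ already makes the inequalities strict.
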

\begin{IEEEproof}
    See Appendix~\ref{app_general_cost_NSE}.
\end{IEEEproof}
\begin{remark}
Since $H(\cdot)$ is continuous and strictly increasing, the constants $c_1$ and $c_2$ are well-defined and unique, and can be quickly determined via the bisection method.
\end{remark}

Theorem~\ref{the_general_cost_NSE} provides a general characterization of the NSE intervention under an intervention cost satisfying Assumption~\ref{ass_cost}. Although a closed-form expression of $c^*$ is generally unavailable without specifying the form of $H$, Theorem \ref{the_general_cost_NSE} reduces the regulator's decision to a bounded one-dimensional optimization problem. The NSE intervention can be computed efficiently using standard optimization algorithms, such as golden-section search \cite{nocedal2006numerical} or Brent's method \cite{brent2013algorithms}. 

\subsection{The NSE of the multi-battlefield R-Lotto game}

In the classical multi-battlefield General Lotto game \cite{vu2021colonel}, the resource constraints for Players A and B are usually imposed in a coupled form, namely the expected sum of resources over all battlefields cannot exceed the resource upper bound. However, this may imply that the follower NE does not admit a closed-form as in Lemma~\ref{the4}, making subsequent regulator optimization analytically intractable. Therefore, we adopt decoupled resource constraints, where each battlefield has its own resource upper bound. 

However, if Player C continues to use the cost function $H(|c|)$, then the problem becomes fully decoupled, since the intervention assigned to each battlefield can be computed separately. Therefore, we consider a quite practical case: the regulator has a limited intervention budget, namely an upper bound on the expected sum of interventions over all battlefields, rather than an intervention cost. Such a formulation is common in regulatory and inspection scenarios, where public agents must allocate limited enforcement resources across multiple targets \cite{polinsky2000economic,avenhaus2002inspection}.

In this subsection, we remove the cost term $H(|c|)$ and assume instead that Player C is subject to an expected resource constraint on the sum of interventions across all battlefields. Although the resource limits of Players A and B remain decoupled across battlefields, the regulator's decisions across battlefields are tied together by one common budget. Therefore, the regulator can no longer optimize each battlefield independently. This coupling leads to a nontrivial resource-allocation problem for the regulator.

Suppose that there are $M\ge1$ battlefields. Let $ \mathbf v=(v_1,\ldots,v_M)\in\mathbb R_+^M$, $\sum_{m=1}^M v_m=1$, where $v_m$ denotes the value of battlefield $m$. For simplicity, battlefields with $v_m=0$ are omitted from the analysis. Let $a_m\ge b_m>0$ denote the predetermined resource budgets of Players A and B on battlefield $m$, respectively. The total budgets are $ a'=\sum_{m=1}^M a_m$ and $b'=\sum_{m=1}^M b_m$. The multi-battlefield R-Lotto game with a regulator budget is defined as follows.

\textit{1) Stage 1 (budget-constrained social regulation):}  Let $\rho \in \mathbb{R}_+$ denote the resource budget of Player C. Player C chooses a randomized intervention vector $ \mathbf C=(C_1,\ldots,C_M)\in\mathbb R^M$, where $\mathbb E_{\mathbf{C}\sim F_{\mathbf{c}}}[\sum_{m=1}^M |C_m|]\le \rho$. Player C aims to maximize its own payoff, defined as
\begin{equation}
\label{C_payoff_budget_multi}
   \Pi_C(\Pi_A,\Pi_B,\mathbf{C}):= \mathbb E_{\mathbf{C}\sim F_{\mathbf{c}}}\left[w_A\Pi_A+w_B\Pi_B\right].
\end{equation}

\textit{2) Stage 2 (General Lotto game with favoritism):} Given an intervention vector $\mathbf c=(c_1,\ldots,c_M)$. Player A chooses a distribution $F_{\mathbf{X}}=(F_{x_1},\ldots,F_{x_M})$ of a variable $\mathbf{X}\in \mathbb{R}_{\ge 0}^{M}$ with $\mathbb{E}_{X_{m} \sim F_{x_{m}}}[X_{m}]=a_m$, Player B chooses a distribution $F_{\mathbf{Y}}=(F_{y_1},\ldots,F_{y_M})$ of a variable $\mathbf{Y}\in \mathbb{R}_{\ge 0}^{M}$ with $\mathbb{E}_{Y_{m} \sim F_{y_{m}}}[Y_{m}]=b_m$. The payoff of Player A is defined as $$ \begin{aligned} & \Pi_A(F_{\mathbf{X}}, F_{\mathbf{Y}};\mathbf{c}) = \sum_{m=1}^{M}v_m\mathbb{P}_{\substack{X_{m} \sim F_{x_{m}}\\ Y_{m} \sim F_{y_{m}} }} ( X_{m} \ge Y_{m}+c_m ), \end{aligned} $$ The payoff of Player B is $$\Pi_B(F_{\mathbf{X}}, F_{\mathbf{Y}};\mathbf{c}) = 1 - \Pi_A(F_{\mathbf{X}}, F_{\mathbf{Y}};\mathbf{c}). $$ 

The next result characterizes the follower equilibrium for every realized intervention vector. Its proof follows directly from the separability of the decoupled follower subgame.

\begin{corollary} \label{col_multi_epsilon_NE} Consider the follower's multi-battlefield general Lotto subgame with a given intervention vector $\mathbf{c}$. For every battlefield $m$, the marginal pair $(\bar{F}_{x_m}^{c_m},\bar{F}_{y_m}^{c_m})$ in the NE $(\bar{F}_{\mathbf{X}}^{\mathbf{c}},\bar{F}_{\mathbf{Y}}^{\mathbf{c}})$ is the solution of single-battlefield subgame \eqref{subgame} provided by Lemma \ref{the4} with $a=a_m$, $b=b_m$ and $c=c_m$. \end{corollary}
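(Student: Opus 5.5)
The approach is to show that the follower subgame, for a fixed realized vector $\mathbf c$, splits into $M$ independent single-battlefield games, and then to assemble the equilibrium battlefield by battlefield using Lemma~\ref{the4}.

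The key structural fact is that each player's strategy set is a product. Player A's admissible strategies are collections of marginals $(F_{x_1},\ldots,F_{x_M})$, each subject to its own constraint $\mathbb E[X_m]=a_m$. The payoff depends on $F_{\mathbf X}$ only through these marginals, because
\begin{equation*}
\Pi_A(F_{\mathbf X},F_{\mathbf Y};\mathbf c)=\sum_{m=1}^M v_m\,\mathbb P(X_m\ge Y_m+c_m)
\end{equation*}
is a sum of terms, and each term involves only $(F_{x_m},F_{y_m})$. Any correlation across battlefields is therefore irrelevant. The same holds for Player B, whose payoff is $1-\Pi_A$. So the strategy set of A is $\mathcal F_{a_1}\times\cdots\times\mathcal F_{a_M}$, that of B is $\mathcal F_{b_1}\times\cdots\times\mathcal F_{b_M}$, and the game is constant-sum with payoff $\sum_m v_m\Pi_A^{(m)}(F_{x_m},F_{y_m};c_m)$, where $\Pi_A^{(m)}$ is the single-battlefield payoff \eqref{Pi_A} with $a=a_m$, $b=b_m$, $c=c_m$.

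Sufficiency comes next. Take the marginals $(\bar F_{x_m}^{c_m},\bar F_{y_m}^{c_m})$ given by Lemma~\ref{the4} on each battlefield, and consider a unilateral deviation by A to $(F_{x_1},\ldots,F_{x_M})$. Since the constraints are decoupled, the payoff change is $\sum_m v_m\bigl[\Pi_A^{(m)}(F_{x_m},\bar F_{y_m}^{c_m};c_m)-\Pi_A^{(m)}(\bar F_{x_m}^{c_m},\bar F_{y_m}^{c_m};c_m)\bigr]$. Each bracket is $\le 0$ by the NE property in Lemma~\ref{the4}, and $v_m\ge 0$, so the deviation is not profitable. The argument for B is symmetric. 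Hence the product profile is an NE.

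Conversely, we want every NE to have marginals that solve the single-battlefield subgames. In a candidate NE, suppose some battlefield $m$ with $v_m>0$ has a marginal that is not a best response there. Replacing only that marginal with a single-battlefield best response keeps all constraints feasible and strictly raises the sum, which is a contradiction. By the constant-sum minimax interchangeability of equilibria, the marginals are then equilibrium strategies of subgame $m$. Under the identification that Lemma~\ref{the4} provides the NE in the given parameterized classes, these are the stated pair. This also shows that the equilibrium value on battlefield $m$ is $P_m(c_m)$ from Corollary~\ref{cor1}.

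The only delicate point is the converse direction. It requires that a single-battlefield best response exists (the supremum is attained), which follows from the subgame's equilibrium existence in Lemma~\ref{the4}. Beyond that, no real obstacle is expected.
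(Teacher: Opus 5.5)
Your sufficiency argument is exactly the separability argument the paper's one-line proof appeals to: you use product strategy sets with per-battlefield constraints, an additive payoff, and sum the single-battlefield NE inequalities from Lemma~\ref{the4} with weights $v_m\ge 0$, so the proposal is correct and takes the same route. The converse you add goes beyond the paper and should be stated more modestly: it only shows each marginal pair is \emph{some} NE of subgame $m$ with value $P_m(c_m)$, which is all the regulator analysis needs, and identifying it with the Lemma~\ref{the4} pair would require uniqueness of the single-battlefield equilibrium, which is not established; also, your ``delicate point'' is moot, since any strictly improving deviation on battlefield $m$ already gives the contradiction, and in any case equilibrium existence does not guarantee best responses to arbitrary opponent strategies (for B they can fail to exist because ties go to A).
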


Then for each battlefield $m$, we define
\[
    P_m(c)
    :=\mathbb{P}_{\substack{X_{m} \sim \bar F_{x_m}^c\\ Y_{m} \sim \bar F_{y_m}^c }} ( X_{m} \ge Y_{m}+c ),
\]
where $(\bar F_{x_m}^c,\bar F_{y_m}^c)$ is the single-battlefield follower NE characterized in Lemma~\ref{the4} with $a=a_m$, $b=b_m$, and intervention parameter $c$. By Corollary~\ref{cor1},
\begin{equation}
\label{Pm_def}
P_m(c)=
\begin{cases}
\tfrac{a_m}{b_m+c+\sqrt{b_m^2+2b_m c}},
& c\ge r_m,\\
1-\tfrac{b_m}{2(a_m-c)},
& 0\le c<r_m,\\
1-\tfrac{b_m}{a_m-c+\sqrt{a_m^2-2a_m c}},
& c<0,
\end{cases}
\end{equation}
where $ r_m
    =
    a_m-\tfrac{b_m+\sqrt{b_m^2+8a_m b_m}}{4}$. For a given deterministic intervention vector $\mathbf c$, the follower equilibrium payoff of Player A is, therefore, $P_M(\mathbf c)
    :=
    \sum_{m=1}^M v_m P_m(c_m).$ Hence, for a randomized intervention vector $\mathbf C$, the regulator's payoff is
\begin{equation}
\label{JM_budget_multi}
    J_M(\mathbf C)
    =
    w_B+(w_A-w_B)
    \sum_{m=1}^M v_m\mathbb E_{\mathbf{C}\sim F_{\mathbf{c}}}[P_m(C_m)].
\end{equation}

For the optimal intervention of the regulator, we define 
\begin{equation}
\label{BA_m_def}
    P_m^-(c) := P_m(-c)-P_m(0), \quad c\ge0,
\end{equation}
\begin{equation}
\label{BB_m_def}
    P_m^+(c) := P_m(0)-P_m(c), \quad c\ge0.
\end{equation}
The function $P_m^-$ is increasing and concave. In contrast, $P_m^+$ is increasing, convex on $[0,r_m)$, and concave on $[r_m,+\infty)$. Then we define an upper concave envelope of $P_m^+$ as
\begin{equation}
\label{Pm_plus_cav}
\widehat P_m^+(c)
=
\begin{cases}
P_m^{+'}(\tau_m) c,
&0\le c<\tau_m,\\
P_m^+(c),
&c\ge \tau_m,
\end{cases}
\end{equation}
where $\tau_m\ge r_m$ is the solution of the tangency equation $ P_m^+(\tau_m)
    =
    \tau_m P_m^{+'}(\tau_m)$. Since
$P_m^+$ is concave on $[r_m,+\infty)$, this solution is unique. For any fixed $\lambda<\frac{v_mb_m}{2a_m^2}$, let $p_{m}(\lambda)\ge1$ be the unique solution of
    \begin{equation}
    \label{qA_equation}
        p_{m}(\lambda)(1+p_{m}(\lambda))^3
        =
        \tfrac{4v_mb_m}{\lambda a_m^2}.
    \end{equation}
For any fixed $\eta\in (0,v_mP_m^{+'}(\tau_m))$, let $q_{m}(\eta)\ge \sqrt{1+\tfrac{2 \tau_m }{b_m}}$ be the unique solution of
    \begin{equation}
    \label{qB_equation}
        q_{m}(\eta)(1+q_{m}(\eta))^3
        =
        \tfrac{4v_ma_m}{\eta b_m^2}.
    \end{equation}
Then we are ready to derive the NSE intervention of the multi-battlefield R-Lotto game.

\begin{theorem}
\label{the_budget_multi_NSE}
Consider the multi-battlefield R-Lotto game. Then the NSE distributions $F_{\mathbf c^*}=(F_{c_1^*},\ldots,F_{c_M^*})$ are characterized as follows:
\begin{itemize}
    \item If $w_A=w_B$, then $F_{c_m^*}\sim\delta_0$ for all $m=1,\ldots,M$.

    \item If $w_A>w_B$, then $F_{c_m^*}\sim\delta_{f_m(\lambda_0)}$, where $\lambda_0>0$ is the unique solution of $\sum_{m=1}^{M}-f_m(\lambda_0)=\rho$. The strictly increasing function $f_m(\lambda)$ is defined as
    \begin{equation*}
    \label{zA_lambda}
   f_m(\lambda)
    =
    \begin{cases}
    0,
    &\displaystyle
    \lambda\ge \tfrac{v_mb_m}{2a_m^2},\\
    \displaystyle
    -\tfrac{a_m}{2}\left(p_{m}^2-1\right),
    &\displaystyle
    \lambda< \tfrac{v_mb_m}{2a_m^2}.
    \end{cases}
    \end{equation*}
   
     \item If $w_A<w_B$, then
     \begin{equation*}
F_{C_m^*}\sim
\begin{cases}
\tfrac{g_m(\eta_0)}{\tau_m}\delta_{\tau_m}\hspace{-1mm}+\hspace{-1mm}(1\hspace{-1mm}-\hspace{-1mm}\tfrac{g_m(\eta_0)}{\tau_m})\delta_0, & 0 \le g_m(\eta_0) < \tau_m,\\
\delta_{g_m(\eta_0)}, & g_m(\eta_0) \ge \tau_m ,
\end{cases}
\end{equation*}
     where $\eta_0> 0$ is the unique solution of $\sum_{m=1}^{M}g_m(\eta_0)=\rho$. The strictly decreasing function $g_m(\eta)$ is defined as
    \begin{equation*}
 g_m(\eta)
=
\begin{cases}
0,
&
\eta>v_mP_m^{+'}(\tau_m),\\
 [0,\tau_m],
&
\eta=v_mP_m^{+'}(\tau_m),\\
\tfrac{b_m}{2}(q_m^2-1),
&
0<\eta<v_mP_m^{+'}(\tau_m).
\end{cases}
    \end{equation*}
\end{itemize}
\end{theorem}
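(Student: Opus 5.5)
By Corollary~\ref{col_multi_epsilon_NE}, for every realization of $\mathbf C$ the followers play the battlefield-wise NE of Lemma~\ref{the4}. The regulator's payoff is therefore \eqref{JM_budget_multi}, and the problem reduces to maximizing $(w_A-w_B)\sum_m v_m\mathbb E[P_m(C_m)]$ over distributions of $\mathbf C$ with $\mathbb E[\sum_m|C_m|]\le\rho$. Only the marginals $F_{C_m}$ enter the objective and the constraint, both linearly. The problem thus decouples into choosing marginal distributions, linked through the single scalar constraint $\sum_m\mathbb E|C_m|\le\rho$.

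The case $w_A=w_B$ is trivial: the objective is constant, and $\delta_0$ is optimal and costless.

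\textbf{Case $w_A>w_B$.} The regulator wants $P_m$ large. Since $P_m$ is strictly decreasing (Lemma~\ref{lem_payoff_mono}), any mass on $c>0$ can be moved to $-c$, which raises $P_m$ at the same budget. So we may assume $C_m\le0$, and the gain is $v_m\mathbb E[P_m^-(|C_m|)]$. Because $P_m^-$ is concave, Jensen's inequality shows that replacing $|C_m|$ by its mean $z_m$ is weakly better. This leaves the deterministic concave program
\[
\max\sum_m v_mP_m^-(z_m)\quad\text{s.t.}\ \sum_m z_m\le\rho,\ z_m\ge0.
\]
Since $P_m^-$ is strictly increasing, the budget binds. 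The KKT conditions with multiplier $\lambda>0$ give $v_mP_m^{-\prime}(z_m)=\lambda$ whenever $z_m>0$, and $z_m=0$ whenever $v_mP_m^{-\prime}(0)\le\lambda$.

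To make this explicit, I will differentiate $P_m(-z)=1-b_m/(a_m+z+\sqrt{a_m^2+2a_mz})$ and substitute $p=\sqrt{1+2z/a_m}$, so that $z=\tfrac{a_m}{2}(p^2-1)$. A short computation should give $v_mP_m^{-\prime}(z)=\tfrac{4v_mb_m}{a_m^2p(1+p)^3}$. Setting this equal to $\lambda$ is exactly equation \eqref{qA_equation}, and at $p=1$ it gives the threshold $\lambda=\tfrac{v_mb_m}{2a_m^2}$. Hence $-z_m=f_m(\lambda)$. Each $|f_m|$ is continuous and nonincreasing in $\lambda$, equals $0$ for large $\lambda$, and tends to $\infty$ as $\lambda\to0$. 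The intermediate value theorem then yields a unique $\lambda_0$ with $\sum_m -f_m(\lambda_0)=\rho$, and uniqueness follows from strict monotonicity on the active range. The resulting optimum is $\delta_{f_m(\lambda_0)}$.

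\textbf{Case $w_A<w_B$.} By symmetry we may take $C_m\ge0$, and the regulator maximizes $\sum_m v_m\mathbb E[P_m^+(C_m)]$. Now $P_m^+$ is not concave, so Jensen's inequality fails. The key tool is that, over distributions on $\mathbb R_+$ with mean $y_m$, the supremum of $\mathbb E[P_m^+(C)]$ equals $\widehat P_m^+(y_m)$, the concave envelope \eqref{Pm_plus_cav}. The upper bound holds because $P_m^+\le\widehat P_m^+$ and $\widehat P_m^+$ is concave. The bound is attained by the two-point law on $\{0,\tau_m\}$ with weight $y_m/\tau_m$ on $\tau_m$ when $y_m<\tau_m$ (the envelope is linear there and $P_m^+(0)=0$), and by $\delta_{y_m}$ when $y_m\ge\tau_m$.

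This reduces the problem to $\max\sum_m v_m\widehat P_m^+(y_m)$ subject to $\sum_m y_m=\rho$, a concave program. Its KKT conditions give $v_m\widehat P_m^{+\prime}(y_m)=\eta$ on the active set. On the linear piece the slope is constant at $v_mP_m^{+\prime}(\tau_m)$, which produces the set-valued middle branch of $g_m$. On the concave piece I differentiate $P_m(c)=a_m/(b_m+c+\sqrt{b_m^2+2b_mc})$ and substitute $q=\sqrt{1+2c/b_m}$, which gives $v_mP_m^{+\prime}=\tfrac{4v_ma_m}{b_m^2q(1+q)^3}$ and hence \eqref{qB_equation}. The lower bound $q\ge\sqrt{1+2\tau_m/b_m}$ corresponds to $c\ge\tau_m$. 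Existence and uniqueness of $\eta_0$ follow by monotonicity of the correspondence $\sum_m g_m$, including the case where $\eta_0$ lands on a plateau value.

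\textbf{Main obstacle.} I expect the hardest step to be this last case: rigorously justifying the concave-envelope reduction and pinning down $\tau_m$. In particular, I must check that the tangency point lies on the branch $c\ge r_m$ and that the formula for $P_m$ is differentiable across $r_m$. Along with this, I need the bookkeeping showing that $g_m$ is well defined with the set-valued branch, and that the budget equation has a unique solution $\eta_0$.
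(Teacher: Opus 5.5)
Your proposal is correct and follows the paper's proof essentially step for step. The shared steps are:
\begin{itemize}
  \item a one-sided reduction via monotonicity of $P_m$;
  \item Jensen's inequality for the concave $P_m^-$;
  \item the upper concave envelope $\widehat P_m^+$, attained by the two-point law on $\{0,\tau_m\}$;
  \item KKT on the resulting concave program, with the substitutions $p=\sqrt{1+2z/a_m}$ and $q=\sqrt{1+2c/b_m}$.
\end{itemize}
Moving mass from $c>0$ to $-c$ rather than to $0$ is an immaterial variation.

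The checks you flag as the main obstacle do go through; the paper simply takes them for granted:
\begin{itemize}
  \item Both one-sided derivatives of $P_m$ at $r_m$ equal $-8b_m/(b_m+\sqrt{b_m^2+8a_mb_m})^2$, so $P_m$ is differentiable across $r_m$.
  \item The function $h(c)=cP_m^{+'}(c)-P_m^+(c)$ is nonnegative on the convex piece. On the concave piece it is strictly decreasing toward $-P_m(0)<0$.
\end{itemize}
Hence $\tau_m\ge r_m$ exists and is unique, and $\widehat P_m^+$ is a concave majorant of $P_m^+$.
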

\begin{IEEEproof}
See Appendix \ref{app7}.
\end{IEEEproof}
\begin{remark}
    Although Theorem~\ref{the_budget_multi_NSE} requires solving several nonlinear equations, the NSE intervention can be computed efficiently. For any fixed $\lambda$ and $\eta$, the left side of \eqref{qA_equation} and \eqref{qB_equation} are strictly increasing. Hence, $p_m$ and $q_m$ can be obtained by a simple bisection method. Similarly, $f_m(\lambda)$ and $g_m(\eta)$ are both strictly monotonic functions. Therefore, $\lambda_0$ and $\eta_0$ can also be computed by bisection. For the case $w_A-w_B<0$, if the bisection variable $\eta$ reaches a breakpoint $v_mP_m^{+'}(\tau_m)$, then the corresponding battlefield can receive any allocation in $[0,\tau_m]$. 
Hence, we only need to check whether the remaining budget can be absorbed by these breakpoint battlefields.
\end{remark}

In Theorem \ref{the_budget_multi_NSE}, the parameters $\lambda_0$ and $\eta_0$ can be viewed as horizontal value lines. The value-weighted marginal gains of battlefields lie above the line are activated, and those below the line receive no intervention, while those exactly on the line are used only when the remaining budget needs to be absorbed. When $w_A>w_B$, Player C favors Player A, so the optimal intervention is negative. Since $a_m\ge b_m$, Player A already has a resource advantage, and the regulator only needs to strengthen this advantage in selected battlefields. Hence, the optimal intervention is deterministic and $F_{c_m^*}$ degenerates at the fixed point $f_m(\lambda_0)$. When $w_A<w_B$, Player C favors Player B, so the optimal intervention is positive. In this case, the intervention may become a two-point distribution: on some battlefields Player C either makes a full-strength intervention $\tau_m$ or gives up the battlefield with intervention $0$.

\section{Decision-making guidance for regulator}

 This section illustrates the results in Theorems~\ref{the_general_cost_NSE} and~\ref{the_budget_multi_NSE} through numerical simulations, and translates them into decision-making guidance for the regulator. 

\subsection{Single-battlefield scenario with typical cost functions}

Consider a simple and commonly used class of costs:
\begin{equation}
\label{power_cost}
    H_n(c)=\tfrac{k}{n}|c|^n,\,
\end{equation}
where $k>0$ and $n\ge1$. The parameter $k$ measures the cost intensity, while $n$ captures how fast the marginal cost increases with the intervention magnitude.

\begin{figure}[t]
    \centering
    \includegraphics[width=0.85\linewidth]{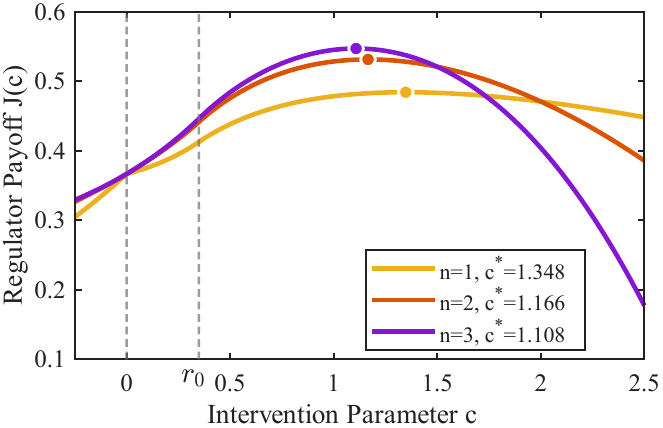}
    \caption{Single-battlefield regulator payoff under $H_n(c)$ defined in \eqref{power_cost}, with $a=1.5,b=1,w_A=0.1,w_B=0.9$.}
    \label{fig1}
\end{figure}

In Fig.~\ref{fig1}, we set $k=0.1$ and compare the regulator payoff at the NSE for $n=1,2,3$, as characterized in Theorem~\ref{the_general_cost_NSE}. All three curves attain their maxima at positive intervention levels, which is consistent with $w_A-w_B<0$. Since the regulator assigns a larger weight to Player B, the optimal decision is to choose $c>0$ and thereby weaken Player A's advantage. The marked optimal points further show how the cost exponent affects the intervention magnitude. As $n$ increases from $1$ to $3$, the optimal intervention decreases from $c_1^*=1.348$ to $c_2^*=1.166$ and then to $c_3^*=1.108$. This indicates that a higher-order cost discourages overly aggressive intervention, because the marginal cost grows faster once $c$ becomes large.

\subsection{Multi-battlefield scenario with resource budget}

Consider $M=5$ battlefields with values
\[
\mathbf v=(0.40,0.25,0.18,0.11,0.06).
\]
Three allocation rules for regulator are compared. The first is the NSE allocation characterized in Theorem~\ref{the_budget_multi_NSE}. The second is the uniform split, where the regulator assigns the same expected intervention budget $\tfrac{\rho}{M}$ to each battlefield. The third is the value-weighted split, where the budget is allocated proportionally to the battlefield values, i.e., the expected intervention on battlefield $m$ is $v_m\rho$.

\begin{figure}[t]
    \centering
    \includegraphics[width=0.85\linewidth]{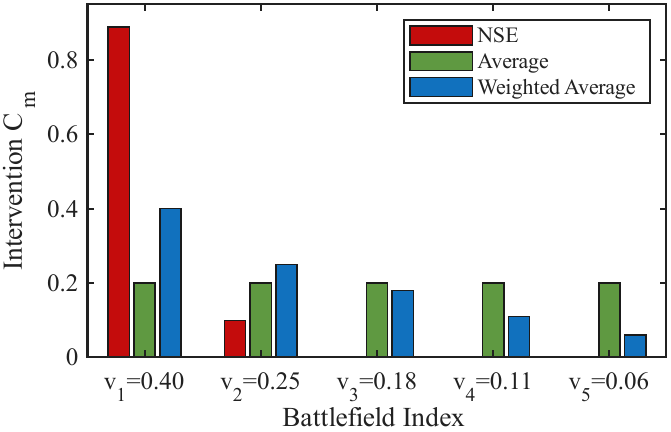}
    \caption{Intervention allocation across five battlefields when the regulator budget $\rho=1$, with $a_m=1.5,b_m=1,w_A=0.1,w_B=0.9$ for all $m$}
    \label{fig_multi_budget_split}
\end{figure}
\begin{figure}[t]
    \centering
    \includegraphics[width=0.85\linewidth]{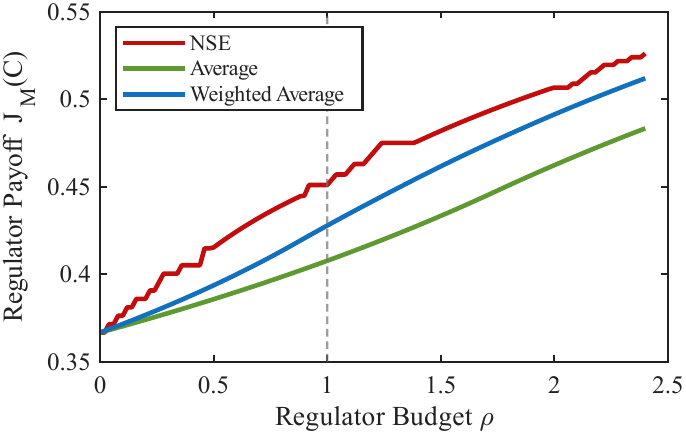}
    \caption{Regulator payoff $J_M(\mathbf C)$ under three allocation rules as the regulator budget $\rho$ varies, with $a_m=1.5,b_m=1,w_A=0.1,w_B=0.9$ for all $m$}
    \label{fig_multi_regulator_payoff}
\end{figure}

Fig.~\ref{fig_multi_budget_split} shows that the NSE allocation is more aggressive than both benchmark rules. While the uniform and value-weighted rules distribute positive intervention to all battlefields, the NSE allocation concentrates most of the budget on the highest-value battlefield and completely abandons the low-value ones. Fig.~\ref{fig_multi_regulator_payoff} compares the resulting payoff of Player C. The NSE achieves the largest regulator payoff over the displayed budget range, and the value-weighted average rule performs better than the simple average rule. 

This pattern is analogous to campaign resource allocation in presidential elections. A campaign planner does not spread intervention resources evenly across all states, nor allocate them strictly in proportion to electoral votes. Instead, the planner concentrates intervention on pivotal swing states and largely leaves states that are safely won or almost impossible to win without additional intervention.}

\section{Conclusion}

In this paper, we introduced the R-Lotto game, an extension of the classical General Lotto game, and derived the NSE. We further extended the R-Lotto framework to a multi-battlefield setting with a regulator budget constraint. The analysis showed that the optimal intervention followed a concentration principle: the regulator allocated resources aggressively to high-value battlefields and abandoned low-value ones. Numerical simulations illustrated the equilibrium structures and translated the theoretical results into decision-making guidance for the regulator. Overall, the R-Lotto framework provided a tractable way to analyze how strategic regulatory intervention affected equilibrium outcomes in both single- and multi-battlefield resource allocation games.

\appendices

\section{The proof of Lemma~\ref{lem_payoff_mono}}
\label{app_payoff_mono}

By Corollary~\ref{cor1}, the equilibrium payoff $P(c)$ has three different expressions according to the value of $c$. We prove the monotonicity of $P(c)$ by checking these three branches separately, and then verify that the pieces are continuously connected at the junction points.

First, consider the case $c<0$. In this region,
\begin{equation}
\label{P_1}
    P(c)=1-\tfrac{b}{a-c+\sqrt{a^2-2ac}}.
\end{equation}
Taking the derivative with respect to $c$ gives
\begin{equation}
\label{P_derivative_negative}
P'(c)=
-\tfrac{b}{(a-c+\sqrt{a^2-2ac})^2}
(1+\tfrac{a}{\sqrt{a^2-2ac}}).
\end{equation}
Since $a>0$, $b>0$, and $\sqrt{a^2-2ac}>0$ for $c<0$, we have $P'(c)<0$. Therefore, $P(c)$ is strictly decreasing on $(-\infty,0)$. Moreover, we can also obtain $\lim_{c\rightarrow 0}P(c)
=
1-\tfrac{b}{2a}$ and $\lim_{c\rightarrow-\infty}P(c)=1$ from \eqref{P_1}.

Second, consider the case $0\le c<r_0$, where
\[
P(c)=1-\tfrac{b}{2(a-c)}.
\]
Thus, the derivative with respect to $c$ is
\begin{equation}
\label{P_derivative_middle}
P'(c)
=
-\tfrac{b}{2(a-c)^2}<0.
\end{equation}
Hence, $P(c)$ is strictly decreasing on $[0,r_0)$. Also, by substituting $c=0$ into this branch, we obtain $P(0)=1-\tfrac{b}{2a}$. Therefore, the left and right limits at $c=0$ coincide.

Third, consider the case $c\ge r_0$. In this case, 
\[
P(c)=\tfrac{a}{b+c+\sqrt{b^2+2bc}}.
\]
Taking derivative with respect to $c$ yields
\begin{equation}
\label{P_derivative_positive}
P'(c)=
-\tfrac{a}{(b+c+\sqrt{b^2+2bc})^2}
(1+\tfrac{b}{\sqrt{b^2+2bc}}).
\end{equation}
Since $a>0$, $b>0$, and $\sqrt{b^2+2bc}>0$ for $c\ge r_0\ge0$, it follows that $P'(c)<0$. Therefore, $P(c)$ is strictly decreasing on $[r_0,+\infty)$. In addition, $\lim_{c\rightarrow+\infty}P(c)=0.$

It remains to verify the continuity at the junction point $c=r_0$. Let $S=\sqrt{b^2+8ab}$. Since $r_0=a-\tfrac{b+S}{4}$,
we have
\[
a-r_0=\tfrac{b+S}{4},
\,
\sqrt{b^2+2br_0}=\tfrac{S-b}{2}.
\]
Substituting these identities into the middle branch and the right branch of $P(c)$ gives
\begin{equation}
\begin{aligned}
1\hspace{-0.5mm}-\hspace{-0.5mm}\tfrac{b}{2(a-r_0)}
\hspace{-0.5mm}=\hspace{-0.5mm}
1\hspace{-0.5mm}-\hspace{-0.5mm}\tfrac{2b}{b+S}\hspace{-0.5mm} =\hspace{-0.5mm}
\tfrac{S-b}{S+b} \hspace{-0.5mm}=\hspace{-0.5mm}
\tfrac{a}{b+r_0+\sqrt{b^2+2br_0}}.
\end{aligned}
\end{equation}
Thus, the two pieces are continuously connected at $c=r_0$.

Combining the above results, $P(c)$ is continuous at both junction points $c=0$ and $c=r_0$, and it is strictly decreasing on each of the three intervals $(-\infty,0)$, $[0,r_0)$, and $[r_0,+\infty)$. Therefore, $P(c)$ is continuous and strictly decreasing on $\mathbb{R}$. This completes the proof.

\section{The proof of Theorem~\ref{the_general_cost_NSE}}
\label{app_general_cost_NSE}
\label{app6}

To prove Theorem~\ref{the_general_cost_NSE}, we first establish a lemma of the regulator's optimal intervention, from the monotonicity of both the follower equilibrium payoff and the intervention cost.

\begin{lemma}
\label{lem_direction_NSE}
For the R-Lotto game \eqref{R_Lotto}, every NSE intervention $c^*$ satisfies:
\begin{itemize}
    \item If $w_A-w_B=0$, then $c^*=0$.
    \item If $w_A-w_B>0$, then $c^*\le0$.
    \item If $w_A-w_B<0$, then $c^*\ge0$.
\end{itemize}
\end{lemma}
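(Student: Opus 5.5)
The plan is to compare any candidate intervention $c$ with its reflection $-c$ (or with $0$) in the reduced regulator objective \eqref{leader_objective_general}:
\[
J(c)=w_B+(w_A-w_B)P(c)-H(|c|).
\]
The two ingredients are Lemma~\ref{lem_payoff_mono}, which says $P$ is strictly decreasing on $\mathbb{R}$, and Assumption~\ref{ass_cost}, which gives $H(0)=0$ with $H$ strictly increasing on $\mathbb{R}_+$. By Definition~\ref{Def_NSE}, an NSE intervention $c^*$ must maximize $J$ over $\mathcal{C}$. I take $\mathcal{C}=\mathbb{R}$, or at least assume $\mathcal{C}$ is closed under $c\mapsto -c$ and contains $0$. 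I argue each case by contradiction: any $c^*$ of the wrong sign is strictly improved by an admissible alternative.

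\emph{Case $w_A=w_B$.} Here $J(c)=w_B-H(|c|)$. For any $c\neq0$ we have $H(|c|)>H(0)=0$ by strict monotonicity, so $J(c)<J(0)$. Hence $0$ is the unique maximizer and $c^*=0$.

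\emph{Case $w_A>w_B$.} Suppose $c^*>0$ and compare it with $-c^*<0<c^*$. Since $P$ is strictly decreasing, $P(-c^*)>P(c^*)$. Since $w_A-w_B>0$, this gives $(w_A-w_B)P(-c^*)>(w_A-w_B)P(c^*)$. The cost term is unchanged because $H(|-c^*|)=H(|c^*|)$. Therefore $J(-c^*)>J(c^*)$, contradicting optimality, so $c^*\le0$.

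\emph{Case $w_A<w_B$.} The argument is symmetric. Suppose $c^*<0$; then $P(-c^*)<P(c^*)$, and multiplying by the negative factor $w_A-w_B$ gives $(w_A-w_B)P(-c^*)>(w_A-w_B)P(c^*)$. The cost is again equal, so $J(-c^*)>J(c^*)$, a contradiction. Hence $c^*\ge0$.

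If $\mathcal{C}$ is not symmetric, I would instead compare with $0$, using $P(0)$ together with $H(|c^*|)>0$. That comparison alone is weaker, since the wrong-sign direction could in principle still trade off gain against cost. So the reflection argument is the clean route, and I will state the symmetry of $\mathcal{C}$ explicitly.

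There is no real obstacle here. The only point needing care is that strictness of both monotonicities is what yields the strict inequality, and hence the strict exclusion of wrong-sign interventions.
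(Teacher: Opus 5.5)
Your reflection argument is correct, but it takes a different route from the paper. The paper never reflects $c$. It compares every wrong-sign intervention directly with the neutral one, $c=0$. For $w_A>w_B$ and $c>0$ it writes $J(c)-J(0)=(w_A-w_B)\bigl(P(c)-P(0)\bigr)-H(c)$. Lemma~\ref{lem_payoff_mono} makes the first term negative, and Assumption~\ref{ass_cost} makes the second term nonpositive (in fact negative), so $J(c)<J(0)$. The case $w_A<w_B$, $c<0$ is symmetric, and the case $w_A=w_B$ is handled exactly as you handle it.

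Your side remark that comparing with $0$ is ``weaker'' is mistaken. In the wrong-sign direction there is no gain to trade against the cost: the weighted payoff term and the cost term both move against the regulator, so strict dominance by $0$ is immediate. The trade-off you have in mind arises only in the favorable direction, and that is what Theorem~\ref{the_general_cost_NSE} subsequently has to resolve. Consequently the paper's route needs only $0\in\mathcal{C}$, and the symmetry of $\mathcal{C}$ that you planned to impose is unnecessary.

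What your route buys is independence from the shape of the cost. In the two sign cases you use nothing about $H$ beyond the fact that the cost depends on $|c|$; neither $H(0)=0$ nor monotonicity of $H$ enters. You still need $H(|c|)>0$ for $c\neq 0$ in the case $w_A=w_B$, just as the paper does.
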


\begin{IEEEproof}
Recall that the regulator’s payoff under the follower equilibrium is
\begin{equation*}
    J(c)
    =
    w_B+(w_A-w_B) P(c)-H(|c|),
\end{equation*}
where $P(c)=\Pi_A(\bar F_x^c,\bar F_y^c;c)$.

First, suppose $w_A-w_B>0$. For any $c>0$, Lemma~\ref{lem_payoff_mono} gives $P(c)<P(0)$. Moreover, Assumption~\ref{ass_cost} gives $H(c)>H(0)=0$. Therefore,
\begin{equation}
\begin{aligned}
J(c)
&=w_B+(w_A-w_B) P(c)-H(c)\\
&<w_B+(w_A-w_B) P(0)=J(0).
\end{aligned}
\end{equation}
Thus every positive intervention is strictly dominated by the neutral intervention $c=0$, and no positive intervention can be optimal. Hence, every NSE intervention satisfies $c^*\le0$.

Next, suppose $w_A-w_B<0$. For any $c<0$, Lemma~\ref{lem_payoff_mono} gives $P(c)>P(0)$. Since $H(|c|)>H(0)=0$, we obtain
\begin{equation}
\begin{aligned}
J(c)&=w_B+(w_A-w_B)P(c)-H(|c|)\\
&<w_B+(w_A-w_B) P(0)=J(0).
\end{aligned}
\end{equation}
Thus every negative intervention is strictly dominated by $c=0$, and no negative intervention can be optimal. Hence, every NSE intervention satisfies $c^*\ge0$.

Finally, if $w_A-w_B=0$, then $J(c)=w_B-H(|c|).$ For every $c\ne0$, $H(|c|)>H(0)=0$, and hence $J(c)<J(0)$. Therefore, the unique NSE intervention is $c^*=0$.
\end{IEEEproof}

As a result, the regulator's optimization problem can be decomposed into two separate cases, namely $w_A-w_B>0$ and $w_A-w_B<0$. Lemma~\ref{lem_direction_NSE} substantially reduces the complexity of the equilibrium analysis and serves as the basis for the NSE characterization presented next.

We now prove Theorem~\ref{the_general_cost_NSE}. For each fixed $c$, the follower subgame admits the NE $(\bar F_x^c,\bar F_y^c)$ provided in Lemma~\ref{the4}. Therefore, by Definition~\ref{Def_NSE}, it remains to solve
\begin{equation}
\label{leader_scalar_app}
    \max_{c\in\mathbb{R}} J(c),
\end{equation}
where $J(c)$ is given in \eqref{leader_objective_general}.

If $w_A-w_B=0$, Lemma~\ref{lem_direction_NSE} directly gives $c^*=0$.

Consider the case $w_A-w_B>0$. By Lemma~\ref{lem_direction_NSE}, it is sufficient to search over $c\le0$. For any $c\le0$, we have
\begin{equation}
\begin{aligned}
J(c)-J(0)
&=
(w_A-w_B)\bigl(P(c)-P(0)\bigr)-H(-c).
\end{aligned}
\end{equation}
Moreover, by Lemma~\ref{lem_payoff_mono},
$$
0\le P(c)-P(0)\le 1-P(0)=\tfrac{b}{2a}, \quad c\le0.
$$
Therefore,
$$(w_A-w_B)\bigl(P(c)-P(0)\bigr)-H(-c)
\le
(w_A-w_B)\tfrac{b}{2a}-H(-c).$$
By the definition of $c_1$, we have $H(c_1)=(w_A-w_B)\tfrac{b}{2a}$. Since $H(\cdot)$ is strictly increasing, for every $c<-c_1$, we have $H(-c)>(w_A-w_B)\tfrac{b}{2a},$ which implies $(w_A-w_B)\bigl(P(c)-P(0)\bigr)-H(-c)<0.$ On the other hand, $(w_A-w_B)\bigl(P(0)-P(0)\bigr)-H(0)=0.$ Hence no maximizer can lie outside $[-c_1,0]$. Therefore,
$$
c^*\in
\arg\max_{-c_1\le c\le 0}
\{(w_A-w_B)P(c)-H(-c)\}.
$$

Finally, consider the case $w_A-w_B<0$. By Lemma~\ref{lem_direction_NSE}, it is sufficient to search over $c\ge0$. For any $c\ge0$, we have
\begin{equation}
\begin{aligned}
J(c)-J(0)
&=
(w_A-w_B)\bigl(P(c)-P(0)\bigr)-H(c)\\
&=
-(w_A-w_B)\bigl(P(0)-P(c)\bigr)-H(c).
\end{aligned}
\end{equation}
Moreover, by Lemma~\ref{lem_payoff_mono},
$$
0\le P(0)-P(c)\le P(0)=1-\tfrac{b}{2a}, \, c\ge0.
$$
By the definition of $c_2$, we have $H(c_2)=-(w_A-w_B)(1-\tfrac{b}{2a})$. Since $H(\cdot)$ is strictly increasing,
$$
H(c)>-(w_A-w_B)(1-\tfrac{b}{2a}),\,\forall c>c_2
$$
which implies $-(w_A-w_B)\bigl(P(0)-P(c)\bigr)-H(c)<0.$ Since $-(w_A-w_B)\bigl(P(0)-P(0)\bigr)-H(0)=0,$ no maximizer lies outside $[0,c_2]$. Therefore,
$$
c^*\in
\arg\max_{0\le c\le c_2}
\{(w_A-w_B)P(c)-H(c)\}.
$$

This completes the proof.

\section{The proof of Theorem~\ref{the_budget_multi_NSE}}
\label{app7}

If $w_A-w_B=0$, then $J_M(\mathbf C)=w_B$ for every feasible randomized intervention. In particular, one may take
$F_{c_m^*}\sim\delta_0$ for all $m=1,\ldots,M$.

Let us suppose $w_A-w_B>0$. Since $P_m(c)$ is decreasing in $c$, any
positive $c$ is dominated by replacing it with zero.
Therefore an optimal intervention can be chosen such that $C_m\le 0$. Then $\sum_{m=1}^M\mathbb E[-C_m]\le\rho,$ and maximizing $J_M$ is equivalent to maximizing $ \sum_{m=1}^M v_m\mathbb E[P_m^-(-C_m)].$ Since $P_m^-$ is increasing and concave, Jensen's inequality gives
\[
    \mathbb E[P_m^-(-C_m)]
    \le
    P_m^-(\mathbb E[-C_m]).
\]
Hence, randomization cannot improve the payoff. Let $ z_m=\mathbb E[-C_m].$ The regulator's problem is reduced to
\begin{equation*}
\label{A_alloc}
    \begin{aligned}
        \min_{\mathbf z\in\mathbb R_{\ge 0}^M} \;
        & -\sum_{m=1}^M v_m P_m^-(z_m) \\
        \text{s.t.} \;
        & \sum_{m=1}^M z_m \le \rho ,
    \end{aligned}
\end{equation*}
which is a convex problem, and the Karush-Kuhn-Tucker (KKT) conditions are necessary and sufficient. Therefore, there exists
$\lambda_0>0$ such that
 \begin{equation*} 
    \begin{cases}
    v_m{P_m^-}'(z_m^*)
    =\lambda_0,
    &z_m^*>0,\\
   v_m{P_m^-}'(z_m^*)
    \le \lambda_0,
    &z_m^*=0,
    \end{cases}
    \end{equation*}
By \eqref{Pm_def} and \eqref{BA_m_def},
\[
    {P_m^-}'(z)
    =
    \tfrac{4a_m^2b_m}
    {\sqrt{a_m^2+2a_mz}
    \left(a_m+\sqrt{a_m^2+2a_mz}\right)^3},
\]
and hence, ${P_m^-}'(0)=\tfrac{b_m}{2a_m^2}$. If $\lambda_0\ge \tfrac{v_mb_m}{2a_m^2},$ then $z_m^*=0$. Otherwise, set $p_m=\sqrt{1+\tfrac{2z_m^*}{a_m}}.$ Then $$z_m^*=\tfrac{a_m}{2}(p_m^2-1),$$ and the equation $v_m{P_m^-}'(z_m^*)=\lambda_0$ is equivalent to
\[
    p_m(1+p_m)^3
    =
    \tfrac{4v_mb_m}{\lambda_0a_m^2}.
\]
Since the term on the left-hand side is strictly increasing for $p_m\ge1$, the
solution $p_m$ is unique. Consequently, $F_{c_m^*}\sim\delta_{f_m(\lambda_0)}.$ Moreover, $\sum_{m=1}^M-f_m(\lambda)$ is continuous and strictly decreasing on the relevant active region. Thus, there exists a unique $\lambda_0>0$ satisfying $\sum_{m=1}^M-f_m(\lambda_0)=\rho.$

Finally, suppose $w_A-w_B<0$. Since $P_m(c)$ is decreasing in $c$, any negative realization of $C_m$ is dominated by replacing it with zero. Therefore, an optimal intervention can be chosen such that $C_m\ge0$. Then maximizing $J_M$ is equivalent to
\begin{equation}
\label{app_p1}
    \begin{aligned}
        \min_{\mathbf C\in\mathbb R_{\ge 0}^M} \;
        & - \sum_{m=1}^M v_m\mathbb E[P_m^+(C_m)] \\
        \text{s.t.} \;
        & \sum_{m=1}^M\mathbb E[C_m]\le\rho.
    \end{aligned}
\end{equation}
For each battlefield $m$, by Jensen's inequality and the definition of the upper concave
envelope $\widehat P_m^+$, we have
\[
    \mathbb E[P_m^+(C_m)]
    \le
    \widehat P_m^+(\mathbb E[C_m]).
\]
Let $z_m=\mathbb E[C_m].$ Then the regulator's payoff is upper bounded by $\sum_{m=1}^M v_m\widehat P_m^+(z_m).$ Conversely, this upper bound is attainable. If $0\le z_m<\tau_m$, choose
\[
    F_{c_m}
    \sim
    \tfrac{z_m}{\tau_m}\delta_{\tau_m}
    +
    (1-\tfrac{z_m}{\tau_m})\delta_0.
\]
Then $\mathbb E[C_m]=z_m$ and
\[
    \mathbb E[P_m^+(C_m)]
    =
    \tfrac{z_m}{\tau_m}P_m^+(\tau_m)
    =
    P_m^{+'}(\tau_m)z_m
    =
    \widehat P_m^+(z_m).
\]
If $z_m\ge\tau_m$, choose $F_{c_m}\sim\delta_{z_m}$. Then
\[
    \mathbb E[P_m^+(C_m)]
    =
    P_m^+(z_m)
    =
    \widehat P_m^+(z_m).
\]
Thus, the original randomized problem \eqref{app_p1} is equivalent to
\begin{equation*}
    \begin{aligned}
        \min_{\mathbf z\in\mathbb R_{\ge 0}^M} \;
        & -\sum_{m=1}^M v_m\widehat P_m^+(z_m) \\
        \text{s.t.} \;
        & \sum_{m=1}^M z_m\le\rho .
    \end{aligned}
\end{equation*}
This is again a convex problem. Hence, the KKT conditions are necessary
and sufficient. Therefore, there exists $\eta_0>0$ such that
\begin{equation*}
    \begin{cases}
    v_m \widehat P_m^{+'}(z_m^*)=\eta_0,
    &z_m^*>0,\\
    v_m \widehat P_m^{+'}(0)\le \eta_0,
    &z_m^*=0.
    \end{cases}
\end{equation*}
By the definition of $\widehat P_m^+$,
\[
\widehat P_m^{+'}(z)
=
\begin{cases}
P_m^{+'}(\tau_m),
&0\le z<\tau_m,\\
P_m^{+'}(z),
&z\ge\tau_m.
\end{cases}
\]
Thus, the optimal allocation satisfies $z_m^*\in g_m(\eta_0),$ where
\[
g_m(\eta)
=
\begin{cases}
0,
&
\eta>v_mP_m^{+'}(\tau_m),\\
[0,\tau_m],
&
\eta=v_mP_m^{+'}(\tau_m),\\
\tfrac{b_m}{2}(q_m^2-1),
&
0<\eta<v_mP_m^{+'}(\tau_m).
\end{cases}
\]
In the last case, $z_m^*\ge\tau_m$. Setting $ q_m=\sqrt{1+\tfrac{2z_m^*}{b_m}},$ we have
\[
    z_m^*=\tfrac{b_m}{2}(q_m^2-1).
\]
Moreover, for $z\ge\tau_m$,
\[
    P_m^{+'}(z)
    =
    \tfrac{4a_mb_m^2}
    {\sqrt{b_m^2+2b_mz}
    \left(b_m+\sqrt{b_m^2+2b_mz}\right)^3}.
\]
Hence, the KKT equation $v_mP_m^{+'}(z_m^*)=\eta_0$ is equivalent to
\[
    q_m(1+q_m)^3
    =
    \tfrac{4v_ma_m}{\eta_0 b_m^2}.
\]
The left-hand side is strictly increasing for $q_m\ge1$, so $q_m$ is
unique. Since $z_m^*\ge\tau_m$, we also have $ q_m\ge \sqrt{1+\tfrac{2\tau_m}{b_m}}.$ Finally, the optimal randomized intervention is implemented as follows.
If $0\le z_m^*<\tau_m$, let
\[
    F_{c_m^*}
    \sim
    \tfrac{z_m^*}{\tau_m}\delta_{\tau_m}
    +
    (1-\tfrac{z_m^*}{\tau_m})\delta_0.
\]
If $z_m^*\ge\tau_m$, let $F_{c_m^*}\sim\delta_{z_m^*}$. This is exactly the intervention stated in the
theorem. Combining it with the results in
Corollary~\ref{col_multi_epsilon_NE} gives an NSE. 

This completes the proof.

\bibliographystyle{ieeetr}  
\bibliography{reference}

\end{document}